\providecommand{\abs}[1]{\left|#1\right|}
\newtheorem{thm}{Theorem}
\begin{document}
%
% paper title
% can use linebreaks \\ within to get better formatting as desired
\title{Incorporating Views on Market Dynamics\\in Options Hedging}
%
%
% author names and IEEE memberships
% note positions of commas and nonbreaking spaces ( ~ ) LaTeX will not break
% a structure at a ~ so this keeps an author's name from being broken across
% two lines.
% use \thanks{} to gain access to the first footnote area
% a separate \thanks must be used for each paragraph as LaTeX2e's \thanks
% was not built to handle multiple paragraphs
%

\author{Antoine~E.~Zambelli
\thanks{This is a preprint. A preliminary version of this work was presented at SIAM FM14 and IECSMA'15.}}

% note the % following the last \IEEEmembership and also \thanks - 
% these prevent an unwanted space from occurring between the last author name
% and the end of the author line. i.e., if you had this:
% 
% \author{....lastname \thanks{...} \thanks{...} }
%                     ^------------^------------^----Do not want these spaces!
%
% a space would be appended to the last name and could cause every name on that
% line to be shifted left slightly. This is one of those "LaTeX things". For
% instance, "\textbf{A} \textbf{B}" will typeset as "A B" not "AB". To get
% "AB" then you have to do: "\textbf{A}\textbf{B}"
% \thanks is no different in this regard, so shield the last } of each \thanks
% that ends a line with a % and do not let a space in before the next \thanks.
% Spaces after \IEEEmembership other than the last one are OK (and needed) as
% you are supposed to have spaces between the names. For what it is worth,
% this is a minor point as most people would not even notice if the said evil
% space somehow managed to creep in.

% The paper headers
\markboth{Journal of \LaTeX\ Class Files,~Vol.~6, No.~1, January~2007}%
{Shell \MakeLowercase{\textit{et al.}}: Bare Demo of IEEEtran.cls for Journals}
% The only time the second header will appear is for the odd numbered pages
% after the title page when using the twoside option.
% 
% *** Note that you probably will NOT want to include the author's ***
% *** name in the headers of peer review papers.                   ***
% You can use \ifCLASSOPTIONpeerreview for conditional compilation here if
% you desire.

% If you want to put a publisher's ID mark on the page you can do it like
% this:
%\IEEEpubid{0000--0000/00\$00.00~\copyright~2007 IEEE}
% Remember, if you use this you must call \IEEEpubidadjcol in the second
% column for its text to clear the IEEEpubid mark.

% use for special paper notices
%\IEEEspecialpapernotice{(Invited Paper)}

\maketitle
\thispagestyle{empty}

\begin{abstract}
We examine the possibility of incorporating information or views of market movements during the holding period of a portfolio, in the hedging of European options with respect to the underlying. Given a fixed holding period interval, we explore whether it is possible to adjust the number of shares needed to effectively hedge our position to account for views on market dynamics from present until the end of our interval, to account for the time-dependence of the options' sensitivity to the underlying. We derive an analytical expression for the number of shares needed by adjusting the standard Black-Scholes-Merton $\Delta$ quantity, in the case of an arbitrary process for implied volatility, and we present numerical results.
\end{abstract}

% IEEEtran.cls defaults to using nonbold math in the Abstract.
% This preserves the distinction between vectors and scalars. However,
% if the journal you are submitting to favors bold math in the abstract,
% then you can use LaTeX's standard command \boldmath at the very start
% of the abstract to achieve this. Many IEEE journals frown on math
% in the abstract anyway.

% Note that keywords are not normally used for peerreview papers.
\begin{IEEEkeywords}
Options, $\Delta$-hedging, Black-Scholes-Merton, Market dynamics, Information.
\end{IEEEkeywords}

% For peer review papers, you can put extra information on the cover
% page as needed:
%\ifCLASSOPTIONpeerreview
%\begin{center} \bfseries EDICS Category: 3-BBND \end{center}
%\fi
%
% For peerreview papers, this IEEEtran command inserts a page break and
% creates the second title. It will be ignored for other modes.
\IEEEpeerreviewmaketitle

\section{Introduction}
% The very first letter is a 2 line initial drop letter followed
% by the rest of the first word in caps.
% 
% form to use if the first word consists of a single letter:
% \IEEEPARstart{A}{demo} file is ....
% 
% form to use if you need the single drop letter followed by
% normal text (unknown if ever used by IEEE):
% \IEEEPARstart{A}{}demo file is ....
% 
% Some journals put the first two words in caps:
% \IEEEPARstart{T}{his demo} file is ....
% 
% Here we have the typical use of a "T" for an initial drop letter
% and "HIS" in caps to complete the first word.
\IEEEPARstart{W}{e} seek to incorporate any views, or information (referred to interchangeably) into the hedging of options. Namely, views on the growth rate of the stock and views on the implied volatility during the holding period of our portfolio. Much work has been done on optimal re-hedging frequency to improve the efficiency of hedges in real-world market conditions (see~\cite{zaka}). However, we base much of our work on~\cite{mastinsek}, which examines the effects of adjusting $\Delta$-hedges with Charm for short-term maturity options. We will adapt and extend this approach from the viewpoint of incorporating information.

Suppose we have a market-neutral options strategy. We enter into a Long position in a European Call option at $t_0$. Here, we introduce a new piece to the puzzle. Suppose we know from experience (or constraints) that we will hold our position until at least a later time $t_1 = t_0 + \Delta t$, whereupon we can exit our position at any time. Classically, we would hedge at $t_0$, re-balancing as needed over the interval $[t_0,t_1]$.

Without re-balancing, due to the market evolving over time, our hedge would no longer be perfect at $t_1$. The question we seek to answer is therefore: Given that we know our holding period interval, can we preemptively adjust our hedge at $t_0$ such that we are perfectly hedged at $t_1$? Or at least better-hedged? What if we have views on the market dynamics over the interval $[t_0,t_1]$?

We assume that our views of other market dynamics exists only on $[t_0,t_1]$. From a practitioner's standpoint, this assumption is akin to the notion that if we do have a lower bound $t_1$ for our holding period, then we can presumably run accurate conditional statistics on market dynamics on past intervals of that length.

\section{Preliminary Look}\label{sec:1}

Given a $\Delta$-hedged portfolio consisting of a Long position in a European call option, hedged at $t_0$
\begin{equation}
	\Pi(t)=V(t)-N(t_0)S(t)=V(t)-V_S(t_0)S(t)
\end{equation}
we wish to have
\begin{equation}
	\Pi_S(t_1)=V_S(t_1)-V_S(t_0)=0
\end{equation}
In the standard Black-Scholes-Merton framework from~\cite{black}, the number of shares $N(t_0)$ we need to short is simply $V_S(t_0)$. Unfortunately, this is unlikely to be accurate, as $V_S$ will likely change with time-varying market conditions. Since we know that $V_S$ is time-varying, we can attempt to correct for its time-decay component by setting
\begin{equation}
	N(t_0)=V_S(t_0)+V_{St}(t_0)\Delta t
\end{equation}
However, this assumes that $V_S$ changes linearly with time. We can include a nonlinear correction parameter $\lambda$, and define
\begin{equation}
	N(t_0)=V_S(t_0)+\lambda V_{St}(t_0)\Delta t
\end{equation}
Though we must still determine $\lambda$.

\subsection{The Literature}

While $\lambda$ can be approximated numerically,~\cite{mastinsek} determines it analytically (to $O(\Delta t^2)$). He treats this from the point of view of first-order hedging effects for short-term maturity options, but this is the same problem (mathematically) as we are attempting to solve. The author defines the hedging error
\begin{equation}\label{eq:DelH}
	\Delta H=\Delta \Pi -\Pi r\Delta t
\end{equation}
and attempts to minimize it. Namely, he arrives at a formula involving the growth rate $\mu$ of the underlying, and concludes that if $\mu=r$, we get back to a BSM hedge, but if $\mu\neq r$, the Mean Squared Hedging Error (MSHE) can be minimized by adjusting the number of shares $N$ by defining
\begin{equation}\label{eq:lambdaM}
	\lambda_M=\frac{(\mu-r)SV_{SS}}{V_{St}} 
\end{equation}
and
\begin{equation}\label{eq:NM}
	N_M=V_S+(\mu-r)SV_{SS}\Delta t
\end{equation}
However, as seen in~\cite{primbs}, minimizing MSHE may not be the best option, opting instead work mainly with the Mean Absolute Hedging Error (MAHE).

\subsection{Extension to Implied Volatility}

Looking at the literature from the point of view of incorporating views, we now have a plausible solution for incorporating information about two quantities: the growth rate of the underlying $\mu$, and the time at which we may liquidate $t_1$ (ie, $\Delta t$).

The natural extension of this work would be to account for any views on implied volatility $\Delta \sigma$. Note that we are not making any changes to what we believe the volatility of the underlying $\hat{\sigma}$ will do, we keep the standard BSM assumption that it will be constant. To summarize, the question is now the following: For a portfolio of the form
\begin{equation}
	\Pi(t)=V(t)-N(t_0) S(t)
\end{equation}
can we adjust the number of shares required to hedge (with respect to the underlying) our Long call position to account for both the $V_{St}$ and $V_{S\sigma}$ components? We work under the following assumptions: the underlying follows a standard Geometric Brownian Motion process
\begin{equation}\label{eq:gbm}
	S(t)=S_0e^{\left(\mu-\frac{1}{2}\hat{\sigma}^2\right)t+\hat{\sigma}W_1}
\end{equation}
and we consider an arbitrary stochastic process for implied volatility of the form
\begin{equation}\label{eq:IVstoch}
	d\sigma_t=f(\sigma_t)dt+g(\sigma_t)dW_2
\end{equation}
where $f,g$ are well-defined deterministic functions. For clarity of notation, we write $f_t,g_t$ for $f(\sigma_t),g(\sigma_t)$.

We also assume $V$ satisfies the $BSM$ equation
\begin{equation}\label{eq:bsm}
	V_t+\frac{1}{2}\hat{\sigma}^2S^2V_{SS}+rSV_S-rV=0
\end{equation}

Note that this is technically incorrect, as we will soon define a process for implied volatility, and the theory does not account for any difference in implied volatility and volatility of the underlying. In fact, there are no two quantities in the BSM framework, only one volatility. Nonetheless, this assumption allows us to simplify our equations a noticeable amount, and we will leave the case where we do not use equation~\eqref{eq:bsm} for a later version (much like using the $BSM$ option price formula after accounting for volatility smiles).

The natural extension of equation~\eqref{eq:NM} would be to set
\begin{equation}\label{eq:falseN}
	N=V_S+\lambda_1 V_{St}\Delta t+\lambda_2 V_{S\sigma}\Delta \sigma
\end{equation}
However, $\Delta\sigma$ is stochastic. This is where our assumption that the interval of information is fixed comes into play. If we have the endpoints of the interval, then in practice we could simulate the paths of the process and take the expected value, recasting the problem in a deterministic framework (explored in Section~\ref{sec:4}). More rigorously, let us take
\begin{equation}\label{eq:N}
	N=V_S+\lambda_1 V_{St}\Delta t+\lambda_2 V_{S\sigma}E\left[\Delta \sigma\right]
\end{equation}
We assume $W_1,W_2$ are uncorrelated. Note again that this formulation would not be as well defined given a random interval $[t_0,\tilde{t}]$.

\section{Derivation of the MSHE}\label{sec:3}

This section presents our main contribution, and derives an expression for the hedging error (to $O(\Delta t^2)$) in our framework and minimizes that error with respect to $\lambda_1,\lambda_2$ in equation~\eqref{eq:N}. This derivation is heavily inspired by~\cite{mastinsek}. A summary of the results, as well as numerical simulations is presented in Section~\ref{sec:4}.

\subsection{Derivation of $\Delta S$, $\Delta \sigma$ and $\Delta V$}\label{sec:31}

We begin with a series expansion of equation~\eqref{eq:gbm}, to obtain first, second, and third order terms:
\begin{equation}
	\Delta S=S_0e^{\left(\mu-\frac{1}{2}\hat{\sigma}^2\right)\Delta t+\hat{\sigma}\sqrt{\Delta t}Z}
\end{equation}
For clarity of notation, we omit $O(\Delta t^2)$ from the equations, though they are present throughout.
\begin{align}
	(\Delta S) =&{}\ S\biggl[ \hat{\sigma} Z_1\sqrt{\Delta t} +\left(\mu-\frac{1}{2}\hat{\sigma}^2\right)\Delta t+\frac{1}{2}\hat{\sigma}^2Z_1^2\Delta t\label{eq:DelS}\\
	&+\left(\frac{1}{6}\hat{\sigma}^3Z_1^2+\hat{\sigma}\left(\mu-\frac{1}{2}\hat{\sigma}^2\right)\right)Z_1\Delta t^{3/2} \biggr]\nonumber\\
	(\Delta S)^2=&{}\ S^2\biggl[ \hat{\sigma}^2Z_1^2\Delta t\label{eq:DelS2}\\
	&+2\hat{\sigma}\left( \left(\mu-\frac{1}{2}\hat{\sigma}^2\right)+\frac{1}{2}\hat{\sigma}^2Z_1^2 \right)Z_1\Delta t^{3/2} \biggr]\nonumber\\
	(\Delta S)^3&=S^3\hat{\sigma}^3Z_1^3\Delta t^{3/2}\label{eq:DelS3}
\end{align}
For $\Delta\sigma$, as per equation~\eqref{eq:IVstoch}, we have
\begin{align}
	(\Delta\sigma)&=f_0\Delta t +g_0\sqrt{\Delta t}Z_2\label{eq:delsig}\\
	(\Delta\sigma)^2&=2f_0g_0\Delta t^{3/2}Z_2+g_0^2\Delta t Z_2^2\label{eq:delsig2}\\
	(\Delta\sigma)^3&=g_0^3\Delta t^{3/2}Z_2^3\label{eq:delsig3} 
\end{align}
We now need to find an expression for $\Delta V$, which we find via a $3$-dimensional Taylor Expansion which gives us
\begin{align}\label{eq:DelV}
	\Delta V=&{} \ V_t\Delta t+V_S\Delta S+V_\sigma\Delta \sigma+V_{St}\Delta S\Delta t\\
	&+V_{S\sigma}\Delta S\Delta \sigma+V_{\sigma t}\Delta\sigma\Delta t +\frac{1}{2}V_{SS}(\Delta S)^2\nonumber\\
	&+\frac{1}{2}V_{\sigma\sigma}(\Delta\sigma)^2+\frac{1}{6}V_{SSS}(\Delta S)^3+\frac{1}{2}V_{SS\sigma}(\Delta S)^2(\Delta\sigma)\nonumber\\
	&+\frac{1}{2}V_{S\sigma\sigma}(\Delta S)(\Delta\sigma)^2+\frac{1}{6}V_{\sigma\sigma\sigma}(\Delta\sigma)^3\nonumber
\end{align}

\subsection{Derivation of $\Delta H$}\label{sec:32}

Now we derive an expression for equation~\eqref{eq:DelH}.
\begin{align}
	\Delta H&=\Delta \Pi - \Pi r \Delta t\\
	&=\Delta V - N\Delta S - \left(V-NS\right)r\Delta t\\
	&=\Delta V - Vr\Delta t - N\Delta S + rSN\Delta t\label{eq:DelHpreLem}
\end{align}
To simplify the expression, we use the property that $V$ satisfies equation~\eqref{eq:bsm}. Note also that $\Delta S\Delta\sigma=E\left[\Delta\sigma\right]\Delta S+g_0\sqrt{\Delta t}Z_2\Delta S$, and recall we are ignoring terms of $O(\Delta t^2)$ or greater. Now, for the $(\Delta S)^2$ term, take equation~\eqref{eq:DelS2}, and rewrite the $\hat{\sigma}^2Z^2\Delta t$ term as $\hat{\sigma}^2(Z^2-1+1)\Delta t$. Then separate out the resulting $S^2\hat{\sigma}^2\Delta t$ term. It follows therefore
\begin{align}\label{eq:DelHpreLem2}
	\Delta H=&{}\ (1-\lambda_1)V_{St}\Delta S\Delta t+(1-\lambda_2)V_{S\sigma}E\left[\Delta\sigma\right]\Delta S\\
	&+V_{S\sigma}g_0\sqrt{\Delta t}Z_2\Delta S+V_\sigma\Delta\sigma +V_{\sigma t}\Delta\sigma\Delta t\nonumber\\
	&+\frac{1}{2}V_{SS}\Gamma+\frac{1}{2}V_{\sigma\sigma}(\Delta\sigma)^2+\frac{1}{6}V_{SSS}(\Delta S)^3\nonumber\\
	&+\frac{1}{2}V_{SS\sigma}(\Delta S)^2(\Delta\sigma)+\frac{1}{2}V_{S\sigma\sigma}(\Delta S)(\Delta\sigma)^2\nonumber\\
	&+\frac{1}{6}V_{\sigma\sigma\sigma}(\Delta\sigma)^3\nonumber
\end{align}
where
\begin{align}
	\Gamma=&{}\ S^2\biggl[ \hat{\sigma}^2(Z_1^2-1)\Delta t\\
	&+2\hat{\sigma}\left( \left(\mu-\frac{1}{2}\hat{\sigma}^2\right)+\frac{1}{2}\hat{\sigma}^2Z_1^2 \right)Z_1\Delta t^{3/2} \biggr]\nonumber
\end{align}
We now use equation~\eqref{eq:bsm} a second time to further simplify the result by noting that
\begin{equation}
		V_{SSS}=\frac{-2}{\hat{\sigma}^2S^2}\left[\left(\hat{\sigma}^2S+rS\right)V_{SS}+V_{St}\right] 
\end{equation}
Applying this to equation~\eqref{eq:DelHpreLem2} and plugging in equations~\eqref{eq:DelS} to~\eqref{eq:delsig3} yields
\begin{align}\label{eq:DelHexplicit}
	\Delta H =&{}\ (1-\lambda_1)V_{St}\hat{\sigma}SZ_1\Delta t^{3/2}-\frac{1}{3}V_{St}\hat{\sigma}SZ_1^3\Delta t^{3/2}\\
	&+V_\sigma g_0\sqrt{\Delta t}Z_2+V_\sigma f_0\Delta t+\frac{1}{2}V_{SS}\Gamma\nonumber\\
	&-\frac{1}{3}(\hat{\sigma}^2S+rS)V_{SS}\hat{\sigma}SZ_1^3\Delta t^{3/2}+\frac{1}{2}V_{\sigma t}g_0\Delta t^{3/2}Z_2\nonumber\\
	&+(1-\lambda_2)V_{S\sigma}\hat{\sigma}SZ_1f_0\Delta t^{3/2}+V_{S\sigma}g_0S\hat{\sigma}\Delta t Z_1Z_2\nonumber\\
	&+V_{S\sigma}\frac{1}{2}S\hat{\sigma}^2g_0\Delta t^{3/2}Z_1^2Z_2\nonumber\\
	&+V_{S\sigma}Sg_0\left(\mu-\frac{1}{2}\hat{\sigma}^2\right)\Delta t^{3/2}Z_2\nonumber\\
	&+\frac{1}{2}V_{\sigma\sigma}g_0^2\Delta t Z_2^2+\frac{1}{2}V_{\sigma\sigma}2f_0g_0\Delta t^{3/2}Z_2\nonumber\\
	&+\frac{1}{2}V_{SS\sigma}S^2\hat{\sigma}^2Z_1^2g_0Z_2\Delta t^{3/2}\nonumber\\
	&+\frac{1}{2}V_{S\sigma\sigma}g_0^2Z_2^2S\hat{\sigma}Z_1\Delta t^{3/2}+\frac{1}{6}V_{\sigma\sigma\sigma}g_0^3\Delta t^{3/2}Z_2^3\nonumber
\end{align}
Now collect the terms in equation~\eqref{eq:DelHexplicit} to get an expression in terms of factors of $Z_1,Z_2$, giving us
\begin{align}\label{eq:DelHcoll}
	\Delta H=&{}\ \gamma\left[ \left(Z_1^2-1\right)+\beta Z_1 +\delta Z_1^3 \right]\\
	&+\phi\left[(1-\lambda_1)Z_1-\frac{1}{3}Z_1^3\right]+(1-\lambda_2)\eta Z_1+\omega Z_2\nonumber\\
	&+ \tau Z_1Z_2+\iota Z_2^2+\chi Z_1^2Z_2+\xi Z_1Z_2^2+\varepsilon Z_2^3 +V_\sigma f_0\Delta t \nonumber
\end{align}
where
\begin{equation}\label{eq:betadelta}
	\begin{aligned}[t]
		\gamma & = \frac{1}{2}V_{SS}\hat{\sigma}^2S^2\Delta t\quad \\
		\beta &= \frac{2\left(\mu-\frac{1}{2}\hat{\sigma}^2\right)}{\hat{\sigma}}\sqrt{\Delta t}\quad\\
		\delta &= \frac{\left(\hat{\sigma}^2-2r\right)}{3\hat{\sigma}}\sqrt{\Delta t}\\
		\phi & = V_{St}\hat{\sigma}S\Delta t^{3/2}\quad\\
		\eta &= V_{S\sigma}\hat{\sigma}Sf_0\Delta t^{3/2}\quad\\
		\varepsilon&=\frac{1}{6}V_{\sigma\sigma\sigma}g_0^3\Delta t^{3/2}\quad\\
		\xi &=\frac{1}{2}V_{S\sigma\sigma}g_0^2S\hat{\sigma}\Delta t^{3/2}\quad
	\end{aligned}
	\begin{aligned}[t]
		\omega=&{}\ V_\sigma g_0\sqrt{\Delta t}+\frac{1}{2}V_{\sigma\sigma}2f_0g_0\Delta t^{3/2}\\
		&+V_{S\sigma}Sg_0\left(\mu-\frac{1}{2}\hat{\sigma}^2\right)\Delta t^{3/2}\\
		&+g_0\Delta t^{3/2}\\
		\tau =& V_{S\sigma}g_0S\hat{\sigma}\Delta t\\
		\iota =&\frac{1}{2}V_{\sigma\sigma}g_0^2\Delta t\\
		\chi =&\frac{1}{2}V_{SS\sigma}S^2\hat{\sigma}^2g_0\Delta t^{3/2}\\
		&+\frac{1}{2}V_{S\sigma}S\hat{\sigma}^2g_0\Delta t^{3/2}\\
	\end{aligned}
\end{equation}
Simplifying further, we obtain
\begin{align}\label{eq:DelHcoll2}
	\Delta H =&{}\ \gamma\left(Z_1^2-1\right)+\theta Z_1 +\psi Z_1^3 +\omega Z_2 + \tau Z_1Z_2\\
	&+\iota Z_2^2+\chi Z_1^2Z_2+\xi Z_1Z_2^2+\varepsilon Z_2^3 +V_\sigma f_0\Delta t\nonumber
\end{align}
where
\begin{align}\label{eq:thetapsi}
	\theta & = \gamma\beta+(1-\lambda_1)\phi +(1-\lambda_2)\eta \\
	\psi &= \gamma\delta-\frac{1}{3}\phi
\end{align}

\subsection{Minimizing the MSHE}\label{sec:33}

Having obtained a suitable form for $\Delta H$, we now try to find $\lambda_1,\lambda_2$ that minimizes the MSHE. We first recall a property of the moments of $Z$. For $Z\sim N(0,1),\ n\in \mathbb{N}$, we have
\begin{equation}\label{rem:3}
	E\left(Z^{2n}\right)=\prod_{i=1}^{n}(2i-1)\ \textrm{and} \ E(Z^{2n-1})=0
\end{equation}

\begin{thm}\label{thm:1}
If $V$ satisfies the BSM equation, with
	\begin{align*}
		\Delta H(\lambda_1,\lambda_2)&=\Delta \Pi(\lambda_1,\lambda_2)-\Pi(\lambda_1,\lambda_2) r \Delta t\\ \Pi(\lambda_1,\lambda_2)&=V-N(\lambda_1,\lambda_2)S\\
		N(\lambda_1,\lambda_2)&=V_S+\lambda_1V_{St}\Delta t+\lambda_2V_{S\sigma}E[\Delta\sigma]
	\end{align*}
	then
	\begin{equation*}
		\exists\ \lambda_2^* \mid Var\left(\Delta H(\lambda_1,\lambda_2^*)\right)=\min_{\lambda_1,\lambda_2}\left\{Var\left(\Delta H(\lambda_1,\lambda_2)\right)\right\}
	\end{equation*}
	where
	\begin{equation}\label{eq:rhoc}
		\lambda_2^*=\frac{V_{S\sigma}f_0+\frac{1}{2}V_{S\sigma\sigma}g_0^2+V_{SS}(\mu-r)S-\lambda_1 V_{St}}{ V_{S\sigma}f_0}
	\end{equation}
	and
	\begin{equation}
		N^*=V_S+V_{SS}(\mu-r)S\Delta t+V_{S\sigma}f_0\Delta t+\frac{1}{2}V_{S\sigma\sigma}g_0^2\Delta t 
	\end{equation}
\end{thm}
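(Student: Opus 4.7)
The plan is to exploit the structure of \eqref{eq:DelHcoll2}: the parameters $\lambda_1, \lambda_2$ enter only through the single scalar coefficient $\theta = \gamma\beta + (1-\lambda_1)\phi + (1-\lambda_2)\eta$ of the $Z_1$ term, with every other coefficient $\gamma,\psi,\omega,\tau,\iota,\chi,\xi,\varepsilon$ independent of $\lambda_1,\lambda_2$. So the minimization reduces to choosing one optimal value $\theta^\ast$ and then inverting to recover $\lambda_2^\ast$ as a function of $\lambda_1$ (this is why the theorem asserts only the existence of $\lambda_2^\ast$).

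First I would compute $E[\Delta H]$ using the moment identities \eqref{rem:3} together with the independence of $Z_1$ and $Z_2$. Every term with an odd power in either variable vanishes, as does $\gamma(Z_1^2-1)$, leaving $E[\Delta H] = \iota + V_\sigma f_0\Delta t$, which has no $\lambda_i$ dependence. Writing $\Delta H - E[\Delta H] = \theta Z_1 + A$ where $A$ collects the remaining terms (centered), and using $E[Z_1^2]=1$, gives
\begin{equation*}
\mathrm{Var}(\Delta H) = \theta^2 + 2\theta\, E[A Z_1] + E[A^2],
\end{equation*}
a quadratic in $\theta$ minimized at $\theta^\ast = -E[A Z_1]$. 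A term-by-term inspection of $A Z_1$ shows that only $\psi Z_1^3 \cdot Z_1$ (contributing $3\psi$ via $E[Z_1^4]=3$) and $\xi Z_1 Z_2^2 \cdot Z_1$ (contributing $\xi$ via $E[Z_1^2]E[Z_2^2]=1$) survive, so $\theta^\ast = -3\psi - \xi$.

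Substituting $\psi = \gamma\delta - \tfrac{1}{3}\phi$ into $\theta^\ast = -3\psi-\xi$ and equating with the defining expression for $\theta$, I solve for $\lambda_2^\ast$ to get $\lambda_2^\ast \eta = \gamma\beta + 3\gamma\delta + \eta + \xi - \lambda_1\phi$. Using the explicit expressions in \eqref{eq:betadelta}, the key algebraic step is the cancellation $\gamma\beta + 3\gamma\delta = V_{SS}\hat{\sigma}S^2(\mu-r)\Delta t^{3/2}$, after which dividing by $\eta = V_{S\sigma}\hat{\sigma}Sf_0\Delta t^{3/2}$ yields exactly \eqref{eq:rhoc}. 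Finally, substituting $\lambda_2^\ast$ back into $N = V_S + \lambda_1 V_{St}\Delta t + \lambda_2 V_{S\sigma} f_0\Delta t$ (using $E[\Delta\sigma] = f_0\Delta t$ from \eqref{eq:delsig}) causes the $\lambda_1 V_{St}\Delta t$ terms to cancel and produces the stated $N^\ast$, confirming that the optimal hedge is independent of $\lambda_1$.

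The main obstacle is bookkeeping rather than conceptual: $A$ contains about eight distinct monomials in $Z_1,Z_2$, and $E[A Z_1]$ must be evaluated term by term, carefully identifying which two terms survive the parity filter. The second delicate point is the telescoping $\gamma\beta + 3\gamma\delta = V_{SS}\hat{\sigma}S^2(\mu-r)\Delta t^{3/2}$; without it the $(\mu-r)$ structure characteristic of \eqref{eq:NM} would not recover cleanly in the generalized formula, and the extension beyond Mastinsek's result would look far less natural.
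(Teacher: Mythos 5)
Your proposal is correct and follows essentially the same route as the paper: both compute $Var(\Delta H)$ from equation~\eqref{eq:DelHcoll2} via the Gaussian moment identities and minimize the resulting quadratic, arriving at the same stationarity condition $\eta(1-\lambda_2)+\phi(1-\lambda_1)+\gamma\beta+3\psi+\xi=0$. Your observation that $\lambda_1,\lambda_2$ enter only through the single coefficient $\theta$, so that the minimization is a one-dimensional quadratic with $\theta^*=-3\psi-\xi$, is just a cleaner packaging of what the paper expresses by expanding $F$ in $a,b$ and noting that $F_a$ and $F_b$ are dependent.
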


\begin{proof}
	Applying equation~\eqref{rem:3} to equation~\eqref{eq:DelHcoll2}, we have
	\begin{align}
		Var\left(\Delta H\right)=&{}\ E\left[ \left(\Delta H \right)^2 \right]-\left(E\left[\Delta H\right]\right)^2\\
		=& \ E\Biggl[ \biggl( \gamma\left(Z_1^2-1\right)+\theta Z_1 +\psi Z_1^3 +\omega Z_2\label{eq:formmshe}\\
		&+ \tau Z_1Z_2+\iota Z_2^2+\chi Z_1^2Z_2+\xi Z_1Z_2^2\nonumber\\
		&+\varepsilon Z_2^3 +V_\sigma f_0\Delta t \biggr)^2 \Biggr]\nonumber\\
		&-\iota^2-2\iota\left( V_\sigma f_0\Delta t \right)-\left( V_\sigma f_0\Delta t \right)^2\nonumber
	\end{align}
	Sparing the reader the ensuing algebra, we obtain
	\begin{align}\label{eq:VarDelH}
		Var(\Delta H)=&{}\ \theta^2+15\psi^2+6\theta\psi+2\gamma^2+2\xi\theta +2\iota^2\\
		&+3\xi^2+6\xi\psi+3\chi^2+2\chi\omega+\omega^2+\tau^2\nonumber\\
		&+15\varepsilon^2+6\varepsilon\chi+6\varepsilon\omega\nonumber
	\end{align}
	Let $F=Var(\Delta H),\ a=1-\lambda_1,\ b=1-\lambda_2$. Expanding equation~\eqref{eq:VarDelH} with respect to $a,b$, we get
	\begin{align}
		F=&{}\ \phi^2a^2+\eta^2b^2+2\eta\phi ab+\left(2\gamma\beta\phi+2\xi\phi+6\psi\phi\right)a\label{eq:F}\\
		&+\left(2\gamma\beta\eta+2\xi\eta+6\psi\eta\right)b+\gamma^2\beta^2+2\xi\gamma\beta\nonumber\\
		&+6\psi\gamma\beta+15\psi^2+2\gamma^2+2\iota^2+3\xi^2+6\xi\psi+3\chi^2\nonumber\\
		&+2\chi\omega+\omega^2+\tau^2+15\varepsilon^2+6\varepsilon\chi+6\varepsilon\omega\nonumber
	\end{align}
	and
	\begin{align}
		F_a&=2\phi^2a+2\eta\phi b+2\gamma\beta\phi+2\xi\phi+6\psi\phi	\label{eq:Fa}\\
		F_b&=2\eta^2b+2\eta\phi a +2\gamma\beta\eta+2\xi\eta+6\psi\eta	\label{eq:Fb}
	\end{align}
	Note that equations~\eqref{eq:Fa} and~\eqref{eq:Fb}, have no dependence on the $-\iota^2-2\iota\left(V_\sigma f_0\Delta t\right)-\left(V_\sigma f_0\Delta t\right)^2$ term, which is not a coefficient of $a$ or $b$. Therefore, minimizing $Var(\Delta H)$ will also minimize the $MSHE$ (which itself contains that term).\\
	
	Furthermore, equation~\eqref{eq:F} has a linear dependence on $a,b$, so we can minimize with respect to those variables. Setting equation~\eqref{eq:Fb} to $0$, we get
	\begin{equation}
		b=\frac{-\gamma\beta-\phi a -3\psi-\xi}{\eta}
	\end{equation}
	Plugging in equations~\eqref{eq:betadelta}, we obtain
	\begin{equation}
		b=\frac{-V_{SS}(\mu-r)S\Delta t+(1-a)V_{St}\Delta t-\frac{1}{2}V_{S\sigma\sigma}g_0^2\Delta t}{ V_{S\sigma}f_0\Delta t}
	\end{equation}
	which gives us
	\begin{equation}\label{eq:rhoinproof}
		\lambda_2^*=\frac{V_{S\sigma}f_0+\frac{1}{2}V_{S\sigma\sigma}g_0^2+V_{SS}(\mu-r)S-\lambda_1 V_{St}}{ V_{S\sigma}f_0}
	\end{equation}
	Taking equation~\eqref{eq:rhoinproof} and plugging it into equation~\eqref{eq:N} gives us
	\begin{equation}
		N^*= V_S+V_{SS}(\mu-r)S\Delta t+V_{S\sigma}f_0\Delta t+\frac{1}{2}V_{S\sigma\sigma}g_0^2\Delta t 
	\end{equation}
\end{proof}

Our expression reduces to that of equation~\eqref{eq:NM} in the case where $f_0=g_0=0$, a comforting result. Additionally, in the cases where both $f_0=g_0=0$ and $\mu=r$, or where our interval $\Delta t\to 0$, our result reduces to the standard BSM hedge. However, we have no dependence on $\lambda_1$, as equations~\eqref{eq:Fa} and~\eqref{eq:Fb} are dependent. We can therefore simplify the framework a little more.

\subsection{Case: $\lambda_1=\lambda_2$}\label{sec:34}

Given the dependence exhibited above, set $\lambda=\lambda_1=\lambda_2$ in equation~\eqref{eq:N}. Note that the following can be derived independently, but working from the result of Theorem $3.2$ greatly reduces the work involved.

\begin{thm}
If $V$ satisfies the BSM equation, with
	\begin{align*}
		\Delta H(\lambda)&=\Delta \Pi(\lambda)-\Pi(\lambda) r \Delta t\\
		\Pi(\lambda)&=V-N(\lambda)S\\
		N(\lambda)&=V_S+\lambda\left(V_{St}\Delta t+V_{S\sigma}E[\Delta\sigma]\right)
	\end{align*}
	then
	\begin{equation*}
		\exists !\ \lambda^* \mid Var\left(\Delta H(\lambda^*)\right)=\min_{\lambda}\left\{Var\left(\Delta H(\lambda)\right)\right\}
	\end{equation*}
	where
	\begin{equation}\label{eq:lambdac}
		\lambda^*=\frac{V_{S\sigma}f_0+\frac{1}{2}V_{S\sigma\sigma}g_0^2+V_{SS}(\mu-r)S}{ V_{S\sigma}f_0+V_{St}}
	\end{equation}
	and
	\begin{equation}\label{eq:fullN}
		N^*=V_S+V_{SS}(\mu-r)S\Delta t+V_{S\sigma}f_0\Delta t+\frac{1}{2}V_{S\sigma\sigma}g_0^2\Delta t
	\end{equation}
\end{thm}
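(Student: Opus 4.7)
The plan is to avoid repeating the full derivation of $\Delta H$ from Section~\ref{sec:31} and Section~\ref{sec:32} and instead reuse the expression in equation~\eqref{eq:F}. Setting $\lambda_1=\lambda_2=\lambda$ corresponds to restricting the two-parameter variance $F(a,b)$ to the diagonal $a=b=1-\lambda$. On that diagonal the quadratic part $(\phi a+\eta b)^2$ becomes $(\phi+\eta)^2 a^2$, which is strictly positive whenever $\phi+\eta\neq 0$, so the restricted variance is a strictly convex univariate quadratic in $\lambda$. This immediately yields existence and uniqueness of a minimizer $\lambda^*$, which is the only genuinely new ingredient beyond Theorem~\ref{thm:1}.

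For the closed form, I would take either of two equivalent routes. The direct route is to differentiate $F(1-\lambda,1-\lambda)$ with respect to $\lambda$ and set the derivative to zero, obtaining $1-\lambda^*=-(\gamma\beta+\xi+3\psi)/(\phi+\eta)$. The shortcut route is to impose $\lambda_1=\lambda_2=\lambda$ directly on equation~\eqref{eq:rhoc}, which turns that relation into a single linear equation for $\lambda$ with the same solution. I would take the shortcut, since Theorem~\ref{thm:1} has already verified that the family of optima for the unconstrained problem is a line in $(\lambda_1,\lambda_2)$-space, and the diagonal intersects that line at exactly one point.

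Next, I would substitute the definitions from equation~\eqref{eq:betadelta}. The key algebraic simplification is that $\beta+3\delta$ collapses to $2(\mu-r)\sqrt{\Delta t}/\hat{\sigma}$, which removes the residual $\hat{\sigma}^2$ dependence in the numerator; together with the common factor $\hat{\sigma}S\Delta t^{3/2}$ that numerator and denominator share, this reduces the ratio to equation~\eqref{eq:lambdac}. Then, using $E[\Delta\sigma]=f_0\Delta t$ from equation~\eqref{eq:delsig}, substituting $\lambda^*$ into $N(\lambda^*)$ produces a cancellation between the factor $V_{St}+V_{S\sigma}f_0$ multiplying $\lambda^*$ and the denominator of $\lambda^*$, yielding equation~\eqref{eq:fullN} and confirming agreement with the $N^*$ of Theorem~\ref{thm:1}.

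The hard part is purely bookkeeping: keeping the seven abbreviations in equation~\eqref{eq:betadelta} straight during these cancellations. Conceptually little is new, since Theorem~\ref{thm:1} has already absorbed all of the stochastic-calculus overhead. A secondary point worth flagging is the degenerate case $\phi+\eta=0$, in which the diagonal lies inside the kernel of $(\phi a+\eta b)^2$ and the restricted variance loses its quadratic term; this would require $V_{St}+V_{S\sigma}f_0=0$, which is non-generic but should be noted for completeness so that the claim of \emph{unique} minimizer is understood under the mild condition $V_{St}+V_{S\sigma}f_0\neq 0$.
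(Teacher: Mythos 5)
Your proposal is correct and follows essentially the same route as the paper: the paper likewise takes the expression for $\lambda_2^*$ from Theorem~\ref{thm:1}, sets $\lambda_1=\lambda_2^*=\lambda^*$, solves the resulting linear equation to get equation~\eqref{eq:lambdac}, and substitutes back into $N$ using $E[\Delta\sigma]=f_0\Delta t$ to obtain equation~\eqref{eq:fullN}. Your added observations --- that the restricted variance is a strictly convex quadratic in $\lambda$ (justifying the $\exists!$ claim) and that this requires the non-degeneracy condition $V_{St}+V_{S\sigma}f_0\neq 0$ --- are correct refinements that the paper leaves implicit.
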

\begin{proof}
	From Theorem~\ref{thm:1}, we have
	\begin{equation}
		\lambda_2^*=\frac{V_{S\sigma}f_0+V_{SS}S(\mu-r)-\lambda_1 V_{St}+\frac{1}{2}V_{S\sigma\sigma}g_0^2}{V_{S\sigma}f_0}
	\end{equation}
	Let $\lambda^*=\lambda_1=\lambda_2^*$, then
	\begin{align}
		\lambda^*\left(1+\frac{V_{St}}{ V_{S\sigma}f_0}\right)&=\frac{ V_{S\sigma}f_0+\frac{1}{2}V_{S\sigma\sigma}g_0^2+V_{SS}(\mu-r)S}{V_{S\sigma}f_0}\\
		\lambda^*\left( \frac{ V_{S\sigma}f_0+V_{St}}{ V_{S\sigma}f_0}\right)&=\frac{V_{S\sigma}f_0+\frac{1}{2}V_{S\sigma\sigma}g_0^2+V_{SS}(\mu-r)S}{V_{S\sigma}f_0}
	\end{align}
	which gives us
	\begin{equation}\label{eq:lambdacinproof}
		\lambda^*=\frac{V_{S\sigma}f_0+\frac{1}{2}V_{S\sigma\sigma}g_0^2+V_{SS}(\mu-r)S}{V_{S\sigma}f_0+V_{St}}
	\end{equation}
	Plugging in equation~\eqref{eq:lambdacinproof} into our expression for $N$ gives us
	\begin{equation}
		N^*=V_S+V_{SS}(\mu-r)S\Delta t+ V_{S\sigma}f_0\Delta t+\frac{1}{2}V_{S\sigma\sigma}g_0^2\Delta t 
	\end{equation}
\end{proof}

While we have a closed-form solution to our problem, we did so with certain simplifying assumptions. Still, we can easily determine whether or not we have over-simplified by doing some numerical simulations.

\section{Summary and Results}\label{sec:4}

Using the method outlined above, we have shown that for a GBM process for the underlying security  and a stochastic implied volatility model as defined below (where $W_1,W_2$ are uncorrelated Weiner processes)
\begin{align}
	dS_t&=S_t\mu dt+S_t\hat{\sigma}dW_1\\
	d\sigma_t&=f_tdt+g_tdW_2
\end{align}
then the MSHE of the portfolio will be minimized by offsetting our options position with a number of shares given by
\begin{equation*}
	\boxed{N^*=V_S+V_{SS}(\mu-r)S\Delta t+ V_{S\sigma}f_0\Delta t+\frac{1}{2}V_{S\sigma\sigma}g_0^2\Delta t}
\end{equation*}

This gives us the answer to the question posed in the introduction of this paper. We can indeed preemptively adjust our hedge at $t_0$, such that we are better-hedged at $t_1$. As we can see, should the market remain static on this interval, then we conclude that the standard BSM hedge is the best as expected. Likewise for an instantaneous hedge (where $\Delta t\to 0$).

We present some results of our model for several possible implied volatility models. The stochastic models could be calibrated at $t_0$ in the usual ways, and those parameters used in the formula for $N^*$.

\subsection{Linear Drift}

Assuming a vanishing $g_t$ function and a constant growth rate $\mu_\sigma$, we have a deterministic model with only a drift term, given by
\begin{equation}
	d\sigma_t=\mu_\sigma dt
\end{equation}
While simple, this could be obtained by recasting the result of simulated paths' expected value (perhaps in the case of a more complex, intractable model). In this case, we obtain
\begin{equation}\label{eq:linearN}
	N^*=V_S+V_{SS}(\mu-r)S\Delta t+V_{S\sigma}\mu_\sigma\Delta t
\end{equation}

\subsection{Ornstein-Uhlenbeck}

The Ornstein-Uhlenbeck model (also known as Vasicek) allows for mean-reversion, and is given by

\begin{equation}
	d\sigma_t=\kappa\left(\bar{\theta}-\sigma_t\right)dt+\alpha dW
\end{equation}
where $\bar{\theta}$ is the long-run mean to which the process reverts, $\kappa$ is the rate at which it reverts to the mean, and $\alpha$ is the volatility of the process. In this case, we have
\begin{align}\label{eq:OUN}
	N^*=&{}\ V_S+V_{SS}(\mu-r)S\Delta t\\
	&+V_{S\sigma}\kappa\left(\bar{\theta}-\sigma_0\right)\Delta t+\frac{1}{2}V_{S\sigma\sigma}\alpha^2\Delta t \nonumber
\end{align}

\subsection{Cox-Ingersoll-Ross}

The CIR model, presented in the seminal work by~\cite{cir}, is given by
\begin{equation}
	d\sigma_t=\kappa\left(\bar{\theta}-\sigma_t\right)dt+\alpha\sqrt{\sigma_t} dW
\end{equation}
where the quantities are defined as in the Ornstein-Uhlenbeck model. The CIR model can guarantee positive volatility if the Feller condition is satisfied, that is: $2\kappa\bar{\theta}>\alpha^2$. Here, we obtain
\begin{align}\label{eq:cirN}
	N^*=&{}\ V_S+V_{SS}(\mu-r)S\Delta t\\
	&+V_{S\sigma}\kappa\left(\bar{\theta}-\sigma_0\right)\Delta t+\frac{1}{2}V_{S\sigma\sigma}\alpha^2\sigma_0\Delta t \nonumber
\end{align}

\subsection{Numerical Results}

For our numerical purposes, we take a simple linear drift model for implied volatility, with $f_0=\mu_\sigma,\ g_0=0$. Note that with this form, $N^*$ is given by equation~\eqref{eq:linearN}.

The following results were computed using $100000$ GBM paths on European Call options with fixed parameters: $S_0=K=100,\ T=0.1,\ r=0.05,\ \hat{\sigma}=\sigma_0=0.2,\ t_0=0,\ t_1=0.02$. We then vary our two parameters $\mu,\mu_\sigma$ on $[-0.5,0.5]$ and set $\sigma(t_1)=\sigma_0+\mu_\sigma\Delta t$. We then compute the MAHE with your adjusted hedge and compare it to the MAHE using the standard BSM hedge, ie, for
\begin{align}
	\Pi&=V-\left( V_S+V_{SS}(\mu-r)S+\left(\mu_\sigma V_{S\sigma}\right)\Delta t \right)S\\
	\Pi_{BSM}&=V-\left(V_S\right)S
\end{align}
we compute
\begin{align}
	E\left[\abs{\Delta H}\right]=&{} \ E\left[ \abs{ \Pi(t_1)-\Pi(t_0) -\Pi(t_0)r\Delta t } \right]\\
	E\left[\abs{\Delta H_{BSM}} \right]=&{} \ E\bigl[\bigl| \Pi_{BSM}(t_1)-\Pi_{BSM}(t_0)\\
	&-\Pi_{BSM}(t_0)r\Delta t\bigr|  \bigr]\nonumber
\end{align}
\begin{figure}[h!]
	\centering\includegraphics[height=4cm]{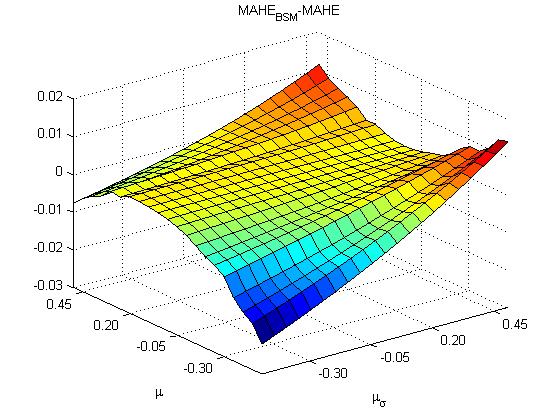}
	\caption{Difference between $E\left[\abs{\Delta H_{BSM}} \right]$ and $E\left[\abs{\Delta H}\right]$.}
	\label{fig:MAHEdiff}
\end{figure}
As we can see in Figure~\ref{fig:MAHEdiff}, there is a significant area for which we have improved the MAHE (while hard to see on the figure, it is also improved for $\mu=r$, see Table~\ref{tbl:1}). Though the apparent decline in performance of our method for $\mu_\sigma<0$ must be examined more closely, this is a promising first pass.
\begin{table}[h!]
\renewcommand{\arraystretch}{1.3}
	\footnotesize{\caption{Difference ($\textrm{to }O(10^{-4})$) in MAHE between BSM and extended models, for $\mu=r$.}
		\begin{center}
			\begin{tabular}{|c|*{12}{c|}}
				\hline
				$\mu_\sigma$ & -0.05& 0&  0.10&  0.20& 0.30& 0.40& 0.50\\
				\hline
				Difference&-0.06& 0& 0.08& 0.15& 0.17& 0.16& 0.12\\ \hline
			\end{tabular}\label{tbl:1}
		\end{center}
	}
\end{table}
To gain a better understanding of the shortcomings of this current formulation, we compare the performance using equation~\eqref{eq:linearN} to $\Delta H_M\equiv \Delta H(\lambda_M)$ (from equation~\eqref{eq:NM}).
\begin{figure}[h!]
	\centering\includegraphics[height=4cm]{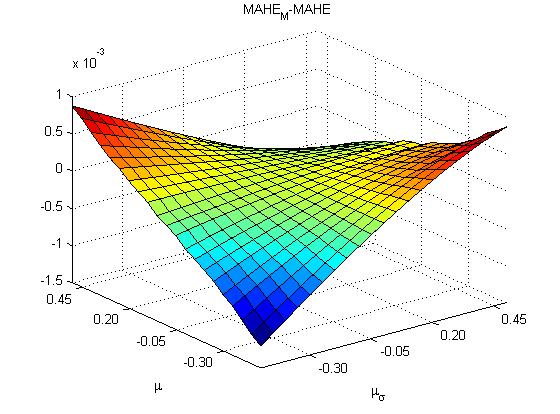}
	\caption{Difference between $E\left[\abs{\Delta H_M} \right]$ and $E\left[\abs{\Delta H}\right]$.}
	\label{fig:MAHEdiffM}
\end{figure}
As the parameters deviate from the BSM framework, we improve the MAHE in our portfolio considerably. Though apparent instability and sensitivity to our parameters should not be discounted.

\section{Conclusion}

We have successfully accounted for $3$ separate views, or pieces of information: the growth rate of the underlying, the change in implied volatility, and the duration of our holding period. This result would allow us to better hedge our position given information about what will (or may) occur during our holding period (or a well-defined subset thereof corresponding to $[t_0,t_1]$).

Our contributions in this work are therefore twofold. Firstly, re-framing the literature \cite{mastinsek} as a more general hedging problem. Secondly, extending this model for an arbitrary stochastic process for implied volatility.

However, we obtained this result with the use of the Black-Scholes-Merton equation, which does not technically apply in our situation. In addition, we have assumed that the stochastic components of the underlying security and implied volatility are independent.

Future work would include two main components. Firstly, deriving a new expression, similar to the one obtained thus far, but removing the reliance on the Black-Scholes equation. If our result still proves to exhibit sub-standard performance in certain areas, then a deeper analysis of those cases should be conducted. Finally, attempting to derive expressions in the case where our random processes $W_1,W_2$ are correlated.

\ifCLASSOPTIONcaptionsoff
  \newpage
\fi

% trigger a \newpage just before the given reference
% number - used to balance the columns on the last page
% adjust value as needed - may need to be readjusted if
% the document is modified later
%\IEEEtriggeratref{8}
% The "triggered" command can be changed if desired:
%\IEEEtriggercmd{\enlargethispage{-5in}}

% references section

% can use a bibliography generated by BibTeX as a .bbl file
% BibTeX documentation can be easily obtained at:
% http://www.ctan.org/tex-archive/biblio/bibtex/contrib/doc/
% The IEEEtran BibTeX style support page is at:
% http://www.michaelshell.org/tex/ieeetran/bibtex/
\bibliographystyle{IEEEtran}
\end{document}